\numberwithin{equation}{section}
\title{No Events on Closed Causal Curves}
\author[C.F.\ Paganini]{Claudio F. Paganini \\ \\ January 2021}
\address{Fakult\"at f\"ur Mathematik \\ Universit\"at Regensburg \\ D-93040 Regensburg \\ Germany}
\address{Max Planck Institute for Gravitational Physics (Albert Einstein Institute), Am M\"uhlenberg 1, D-14476 Potsdam, Germany}
\email{claudio.paganini@ur.de}
\newtheorem{Def}{Definition}[section]
\newtheorem{Conjecture}{Conjecture}
\newtheorem{Prp}[Def]{Proposition}
\newtheorem{axiom}{Axiom}
\newcommand{\beq}{\begin{equation}}
\newcommand{\eeq}{\end{equation}}
\newcommand{\Proof}{\begin{proof}}
\newcommand{\QED}{\end{proof} \noindent}
\DeclareMathOperator{\tr}{tr}
\newcommand{\Lin}{\text{\rm{BL}}}
\newcommand{\bitem}{\begin{itemize}[leftmargin=2.5em]}
\newcommand{\eitem}{\end{itemize}}
\DeclareFontFamily{OT1}{rsfso}{}
\DeclareFontShape{OT1}{rsfso}{m}{n}{ <-7> rsfso5 <7-10> rsfso7 <10-> rsfso10}{}
\DeclareMathAlphabet{\mycal}{OT1}{rsfso}{m}{n}
\begin{document}

\maketitle

\begin{abstract} 
We introduce the Causal Compatibility Conjecture for the Events, Trees, Histories (ETH) approach to Quantum Theory (QT) in the semi-classical setting. We then prove that under the assumptions of the conjecture, points on closed causal curves are physically indistinguishable in the context of the ETH approach to QT and thus the conjecture implies a compatibility of the causal structures even in presence of closed causal curves. As a consequence of this result there is no observation that could be made by an observer to tell any two points on a closed causal curve apart. We thus conclude that closed causal curves have no physical significance in the context of the ETH approach to QT. This is an indication that time travel will not be possible in a full quantum theory of gravity and thus forever remain a fantasy.  
\end{abstract}

\tableofcontents

\section{Introduction}\label{sec:introduction}
Few theoretical curiosities have inspired the collective human imagination as much as the possibility of time travel, the idea that one might be able to go back in time.  The idea of a time machine was popularized by H. G. Wells' 1895 novel ``The Time Machine'' and the genre endures a lasting popularity. This lasting popularity was aided by the fact that General Relativity admits solutions with closed timelike curves \cite{godel2003example,Thorne:1992gv}. This suggests that time travel might, at least in principle, be physically possible; see for example the lecture notes \cite{shoshany2019lectures} for a pedagogical introduction to the topic. However it is to be remarked that, due to the fact that it has never been observed and the many logical paradoxes associated with time travel, such as the grandfather paradox, time travel is by many considered to be an artefact in the theory. (See for example \cite{lewis1976paradoxes} for a philosophical discussion of the problem). Nevertheless there is a vast scientific literature discussing closed causal curves in various contexts. 

The present work was inspired by Rovelli's recent paper \cite{rovelli2019travel} where he claims that a full thermodynamical treatment along the closed causal curves manages to resolve the apparent paradoxes without resorting to Quantum Theory (QT). 
In the present paper, on the contrary, instead of resorting to thermodynamic and arrow of time type arguments, we will consider closed causal curves in a semi-classical setting similar to the works in \cite{deutsch1991quantum, politzer1994path, hartle1994unitarity, yurtsever1994algebraic}. Deutsch \cite{deutsch1991quantum} studied generalizations based on the ideas of quantum computations in the context of the Many-Worlds interpretation of QT. Politzer \cite{politzer1994path} and Hartle \cite{hartle1994unitarity} studied generalizations both in the density matrix formulation and the sum-over-histories formulation, establishing that the resulting dynamics would be non-linear and non-unitary and would already effect observations prior to the non-chronal region. Finally, Yurtsever \cite{yurtsever1994algebraic} studied the generalization of QT in the presence of Closed Causal Curves based on the algebraic approach to field theory. A different approach to resolve the paradoxes associated with closed timelike curves was taken by Hawking in \cite{hawking1992chronology} where he uses a semi-classical setup to conjecture that a full theory of quantum gravity will prevent closed timelike curves to form in a spacetime evolving from sufficiently regular initial data. 
Our approach in the present paper is closer to the first set of ideas yet it is different in the fact that we start with a novel and consistent formulation of QT \cite{froehlich2019review} which was developed to resolve the measurement problem and investigate what it has to tell us in the semi-classical regime in the presence of closed causal curves. It is worth to mention that the evolution of states in Fr\"ohlich's Events, Trees, Histories (ETH) formulation is already non-linear and non-unitary by default and unitary Schrödinger-Liouville evolution for an open quantum system can be obtained as an approximation \cite{F,frohlich2021time}.

We first collect those definitions from Fr\"ohlich's ETH approach to QT which are relevant to the argument in this paper. In this context we compare the causal structure with the theory of ``Causal Sets''\cite{dowker2005causal}, an other approach to fundamental physics. This comparison shows that closed causal chains can not exist in the relativistic formulation of the ETH approach, essentially by definition of the causal structure. However, it is not entirely clear yet how the classical spacetime emerge in a suitable limit. Therefore one often works in a semi-classical setting with a QT living on a fixed background spacetime. In the present paper we introduce the Causal Compatibility Conjecture, which states, that the definition of the causal structure in the ETH approach to QT is compatible with the causal structure of the underlying spacetime in the semi-classical setup in a particular manner. Then, assuming the Causal Compatibility Conjecture to hold, we show that despite the fact that the ETH approach forbids closed causal chains it is compatible in the semi-classical setting with spacetimes featuring closed causal curves. This is achieved by showing that in this setup no events, according to the definition of the ETH approach, happen along a closed causal curve. Therefore all the points are physically indistinguishable and hence closed causal curves, if they were to exist in the background spacetime, are physically irrelevant. It thus  resolves the paradoxes associated with closed causal curves in a rather surprising way. This is a first practical example of a (pseudo) passive state in the ETH approach. 

\subsubsection*{Overview of the paper}
In Section \ref{sec:preliminaries} we briefly recall the definition of the causal structure in General Relativity. Then in Section \ref{sec:ETH} we first give a informal introduction to the ETH approach to QT before we collect the relevant definitions for the framework and give a short comparison to the theory of ``Causal Sets'' and how this prevents closed causal chains in the full QT. In Section \ref{sec:setup} we discuss the semi-classical setup. In this context we introduce and motivate the Causal Compatibility Conjecture. Finally in Section \ref{sec:ccc} we prove Proposition \ref{equal} which is the main result of this paper.

 \section{Background}\label{sec:preliminaries} 
In this section we collect the definitions of the causal structure in General Relativity and the structures of the ETH approach to QT which are relevant for the argument in the paper. 
\subsection{Causal Structure in General Relativity}
As it is central for the argument in this paper we shortly recall the definition for the causal structure in General Relativity. The central postulate of General Relativity is a $4$-dimensional time orientable Lorentzian manifold equipped with a metric (${M}$,$g$). 
Here we choose the signature of $g$ to be ($-$,$+$,$+$,$+$). A parametrized curve $\gamma$ at a point $p$ in the manifold ${M}$ can either be time-, space-, or lightlike/null depending on whether 
\begin{equation}
\epsilon(p) = \left.-g_{\rho \sigma}\dot \gamma^\rho \dot \gamma^\sigma \right|_p
\label{eq:timespacenull}
\end{equation}
is positive, negative, or zero. Here $\dot \gamma^\rho $ is the tangent vector of the curve $\gamma$ at the point $p$. This definition is then used to introduce a causal structure on the spacetime (${M}$,$g$). An event at one point on the manifold can influence events at another point on the manifold if there exists a causal, future directed curve from the first to the second point. A causal curve is one that is everywhere timelike or null. The causal future of a set of points denoted by $S$ on a manifold ${M}$ is denoted by $J^+(S)$. It consists of all points in ${M}$, that can be reached from a point in $S$ by a future directed, causal curve. The causal past $J^- (S)$ is  defined analogous to its future counterparts. A closed causal curve $\gamma$ is a closed curve who's tangent vector is everywhere timelike or null. Thus by travelling along a closed causal curve one can return to the same event in spacetime while always moving forward in time locally.

\subsection{``Events Trees Histories'' Approach to Quantum Theory}\label{sec:ETH} 
Given the fact that the ETH approach to QT is quite new and unfamiliar for most readers we will start of this section with an informal summary of the ideas behind the framework and how it relates to more established formulations.

To contextualize the ideas at the heart of the ETH approach to QT we start with a short discussion about the Copenhagen interpretation as it is probably the approach to QT known best by most readers. A system evolves according to the Schrödinger equation (i.e. it evolves unitarily) as long as it is left alone. To make a measurement, one specifies a physical observable which one wants to measure and one executes an operation that allows to access that particular information about the system one is interested in. Any observable (i.e. physical property of the system) is represented by a self-adjoint operator and the measurement result one will obtain is one of the eigenvalues of this operator. The state of the quantum system tells an observer with which probability they will measure a particular eigenvalue. According to observations in experiments with repeated measurements we know that to obtain the correct prediction for subsequent measurements, the state of the system after a successful measurement has to be projected to the eigenspace corresponding to the eigenvalue obtained as a result in said measurement. This has some issues. In particular the fact that you can choose when and what observable you want to measure and that your measurement, hence your actions, seem to influence the evolution of the system. The fact that the measurement changes the evolution of the physical system, as during measurement the system under observation does not evolve according to the Schrödinger equation (i.e. not unitary) leaves us with a curious situation: we now have two sets of rules for the evolution of a quantum system, one set of rules that applies during measurements and another set that applies the rest of the time. This leaves the underlying reality in question.

Now, the Many Worlds interpretation of QT, as advocated by Deutsch \cite{deutsch1991quantum} for example, resolves these issues by claiming that the collapse of the wave-function does not exist, but all evolution is unitary. However, the wave-function splits into different branches that decohere and an observer just happens to live in a branch that observes a particular measurement outcome. All the alternative branches are as real as the one the observer sees and the wave-function gives the probability that they experience the branch of the wave-function they believe to live in. However, it is unclear when and how exactly a split occurs and how the splitting procedure is compatible with relativity. 

The ETH (Events, Trees, Histories) approach to quantum mechanics resolves these issues in a novel and interesting manner. In their recent paper \cite{frohlich2021time} Fr\"ohlich and Pizzo state about the ETH approach to QT that: 

\textit{``One might argue that `Principle of Diminishing Potentialitie(PDP)' and the Collapse Postulate provide a mathematically precise version of the Many-Worlds Interpretation of QM. However, in the ETH-Approach, there is no reason, whatsoever, to imagine that many alternative worlds actually exist!"}

Contrary to Many Worlds which is usually discussed in the Schrödinger picture, where the operators corresponding to observables are fix, and the state evolves according to the Sch\"odinger equation (unitarily), the ETH approach to QT is formulated in the Heisenberg picture, where the operators evolve by unitary conjugation while the state remains fixed. The heart of the ETH formalism is a sharp definition of an event. At every time-step (for simplicity assuming discrete time) the formalism selects a self-adjoint operator according to a set of rules which we will explain in detail below. If that operator has more than one distinct eigenvalue, the (otherwise fix) state is projected onto the eigenspace corresponding to one of the eigenvalues, with a probability given by Born's rule.  Therefore one can think of Events as a sort of “natural” measurement in the Copenhagen sense. With this you arrive at a stochastic evolution for your open local quantum system. An Event basically corresponds to a “split” in the context of the Many Worlds interpretation, except that here we have a sharp definition when it occurs and what it implies. Furthermore the ETH approach is compatible with relativity as we will discuss below.

It is worth noting, that ETH approach to QT deals with \textit{open} and \textit{isolated} systems. The fact that the system is open is crucial, because a closed isolated system does \textit{not} admit any events and thus evolves unitarily. In the recent paper \cite{frohlich2021time} Fröhlich and Pizzo show, that if you take a local non relativistic quantum system (LQS) and couple it to a mock electromagnetic field you can interpolate the evolution of the LQS between stochastic evolution, in the case of strong coupling between the LQS and the mock electromagnetic field, and (almost) unitary evolution in case of weak coupling between the LQS and the mock electromagnetic field. They then proceed to discuss the case of an actual experiment where the LQS splits into a probe and a measuring device. The measuring device is assumed to always be strongly coupled to the electromagnetic field (think thermal radiation emitted by any macroscopic object). Assuming the probe is weakly coupled to the electromagnetic field, then its evolution is (almost) unitary until it interacts with the measuring device (i.e. until you measure it). This is in accordance with our experience in actual experiments. This concludes the informal introduction of the ETH approach to QT and we will no proceed to introduce the detailed mathematical structures.  

Here we only collect those structures and definitions of the ETH approach that are relevant to the argument in the later sections. For a reader unfamiliar with the approach, the recent review by Fr\"ohlich \cite{froehlich2019review} or the discussion in \cite{cfseth} is a good place to start. For technical details on the non-relativistic formulation see \cite{frohlich2015math}.
Here, we limit the discussion to the relativistic formulation and closely follow \cite{froehlich2019relativistic} and \cite{cfseth} but restricting the exposition to a minimum of what is needed to understand the arguments in the present paper. We will use the occasion to point out similarities between the relativistic formulation of the ETH approach and the theory of ``Causal Sets'' \cite{dowker2005causal} and use this to argue that the full QT prohibits the existence of closed causal chains. This leads to a tension in the causal structures in the semi-classical regime where the underlying spacetime features closed causal curves. We will ultimately resolve this tension in Section \ref{sec:ccc}.  

In the ETH formulation of relativistic QT, a model of an isolated open physical system $S$ is defined by specifying the following data: 
\[ %\label{systems}
S=\big\{\mathcal{M}, \mathcal{E}, \mathcal{H}, \big\{\mathcal{E}_{P}\big\}_{P\in \mathcal{M}},\succ \big\}\,, \]
where $\mathcal{M}$ is a model of spacetime, $\mathcal{E}$ is a $C^{*}$-algebra represented on a Hilbert space 
$\mathcal{H}$, $\big\{ \mathcal{E}_{P} \big\}_{P \in \mathcal{M}}$ is a family of von Neumann algebras associated with every point in $M$ and $\succ$ is the relation  on $M$ induced by the ``Principle of Diminishing Potentialities'' for timelike separated points which is given by the following definition.

\begin{Def}\label{def:timelikeETH}
A spacetime point $P'$ is in the future of a spacetime point $P$, written as $P'\succ P$ (or, equivalently, $P$ is in the past of $P'$, written as $P\prec P'$\,) if
\begin{equation}\label{PDP2}
\boxed{
\,\mathcal{E}_{P'} \subsetneqq \mathcal{E}_{P}\,, \quad \mathcal{E}'_{P'} \cap \mathcal{E}_{P} \,\,\text{is an } \infty-\text{dim. non-commutative algebra}\,
}
\end{equation}
Here $\mathcal{E}'_{P'}:=\{x \in \mathcal{E}|\, \,\,  [x,y]=0, \,\,\, \forall y \in \mathcal{E}_{P'} \} $ is the commutator of the algebra $\mathcal{E}_{P'}$.
\end{Def}

% \begin{subequations}\label{PDP2}
% \begin{empheq}[box=\widefbox]{align}
%   &\mathcal{E}_{P'} \subsetneqq \mathcal{E}_{P}\,,\nonumber\\
%  &\mathcal{E}'_{P'} \cap \mathcal{E}_{P} \,\,\text{is an } \infty-\text{dim. non-commutative algebra}\nonumber\,
% \end{empheq}
% \end{subequations}
% \end{Def}
In fact, this definition holds as a theorem in an axiomatic formulation of quantum electrodynamics in four-dimensional Minkowski space proposed by
 Buchholz and Roberts \cite{buchholz2014new} (This is in fact the key motivation for the Causal Compatibility Conjecture which will be introduced in Section \ref{sec:setup}) 
\begin{Def}\label{def:spacelikeETH}
If a spacetime point $P'$ is neither in the future of a spacetime point $P$ nor in the past of $P$ we say that $P$ and $P'$ are \textit{space-like separated}, written as
$P$\, {\Large$\bigtimes$} \,$P'$. \hspace{6.4cm} 
\end{Def}
\begin{figure}[h]
\centering
\includegraphics[width= 60mm]{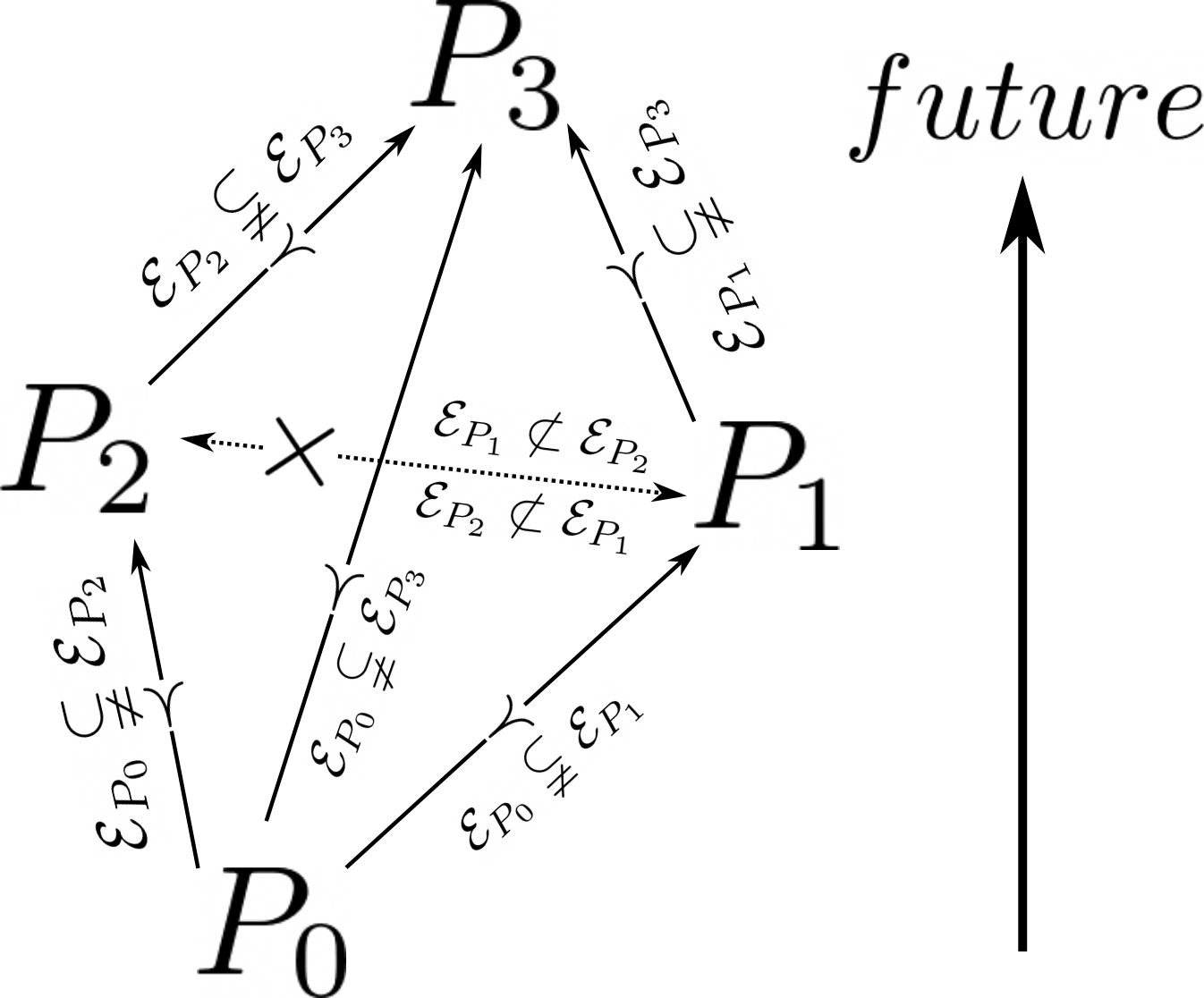}
\caption{Schematic representation of the causal structure between four points and the relations between their corresponding algebras. The points $P_1$ and $P_2$ are spacelike separated, all other relations are causal.}
\label{fig:causal}
\end{figure}
See Figure \ref{fig:causal} for a schematic depiction of the causal structure.\\
At this point it is interesting to remark that the strict partial order defined on $\mathcal{M}$ by Definition \ref{def:timelikeETH} is compatible with the first two axioms of Causal Set Theory \cite{dowker2005causal}. In Causal Set Theory a causal set is defined to be a locally finite partially ordered set. This means a set $C$ together with a relation $\prec$, called “precedes”, which satisfy the following axioms:
\begin{enumerate}[(a)]
    \item if $x \prec y$ and $y \prec z$ then $ x \prec z$, $\forall x,y,z \in C$ (transitivity);
    \item if $x \prec y$ and $y \prec x$ then $x = y$  $\forall x,y \in C$ (non-circularity);
    \item for any pair of fixed elements $x$ and $z$ of $C$, the set $\{y\,|\,\, x \prec y \prec z\}$ of elements lying between $x$ and $z$ is finite.
\end{enumerate}
If we consider the spacetime $\mathcal{M}$ as our set of interest, then it is clear that the relation in Definition \ref{def:timelikeETH} satisfies both axiom (a) and axiom (b) of Causal Set Theory. It then follows immediately from axiom (b) that no closed causal chain of points in $\mathcal{M}$ can exist. Before we can talk about axiom (c) in the context of the ETH approach we need to define under which conditions an actual event happens. 

We denote by $\Omega$ the density matrix on $\mathcal{H}$ representing the actual state of the system $S$. To denote the expectation value of a self-adjoint operator $X$ (which represents a physical characteristic of the system $S$)  in the state $\omega$ determined by $\Omega$, we use the notation 
$$\omega(X):= \tr(\Omega\,X), \qquad \forall X=X^* \in\Lin(\mathcal{H}), $$
where $\Lin(\mathcal{H})$ is the set of bounded linear operators on $\mathcal{H}$. This definition of $\omega$ is then extended to a functional on all of $\Lin(\mathcal{H})$. 
We next define the centralizer and the center of an algebra for a given state. 
\begin{Def}[Centralizer]
Given a $^{*}$-algebra $\mathcal{A}$ and a state $\omega$ on $\mathcal{A}$, the centralizer, $\mathcal{C}_{\omega}(\mathcal{A})$, of the state $\omega$ is the subalgebra of $\mathcal{A}$ spanned by all operators $Y$ in $\mathcal{A}$ with the property that 
$$\omega([Y, X]) =0, \qquad \forall X \in \mathcal{A},$$
i.e.
$$\mathcal{C}_{\omega}(\mathcal{A}):=\left\{Y\in \mathcal{A}|\,\,\omega([Y, X]) =0, \,\,\forall X \in \mathcal{A}\right\} .$$
\end{Def}
Note that with the centralizer defined in this way the state $\omega$ is a (normalized) trace on this subalgebra of $\mathcal{A} $.
\begin{Def}[Center of the Centralizer]
The \textit{center} of the centralizer, denoted by $\mathcal{Z}_{\omega}(\mathcal{A})$, is the abelian subalgebra of the centralizer  consisting of all operators in $\mathcal{C}_{\omega}(\mathcal{A})$ which commute with all other operators in 
$\mathcal{C}_{\omega}(\mathcal{A})$, i.e.
$$\mathcal{Z}_{\omega}(\mathcal{A}):= \big\{ Y \in \mathcal{C}_{\omega}(\mathcal{A})| \,\,[Y,X]=0\,\, \forall X \in \mathcal{C}_{\omega}(\mathcal{A}) \big\}. $$
\end{Def}
In the following we will denote with $\omega_P$ the state of the quantum system right before the spacetime point $P$. Further down we will discuss how this state is determined. With these definitions at hand an event is defined in the following way. 
 \begin{Def}[Event]\label{ETHevent}
 An event $\lbrace \pi_{\xi}, \xi \in \mathfrak{X} \rbrace \subset \mathcal{E}_{P}$, with 
$\lbrace \pi_{\xi}, \xi \in \mathfrak{X} \rbrace$ not contained in $\mathcal{E}_{P'}, \,\text{  for  }\, P'\prec P$, 
starts happening in the future of $P$ if $\mathcal{Z}_{\omega_P}(\mathcal{E}_P)$ is non-trivial,\footnote{The algebra $\mathcal{Z}_{\omega_P}(\mathcal{E}_P)$ is an abelian von Neumann algebra. On a separable Hilbert space, it is generated by a single self-adjoint operator $G$, whose spectral decomposition yields the projections,
$\lbrace \pi_{\xi}, \xi \in \mathfrak{X} \rbrace$, describing a potential event.}
\begin{equation}\label{act-event}
\lbrace \pi_{\xi}, \xi \in \mathfrak{X} \rbrace \,\,\text{generates   }\,\, \mathcal{Z}_{\omega_{P}}\big(\mathcal{E}_{P}\big),
\end{equation}
and 
\[ %\label{real-alt}
\omega_{P}(\pi_{\xi_{j}}) \,\,\text{ is \textit{strictly positive}},\, \,\, \xi_j \in \mathfrak{X}, \,\, j=1,2, \dots, n\,, \]
for some \, $n\geq 2$.
 \end{Def}
% \begin{equation}
%     \mathcal{E}_{P} \qquad \supset \qquad \mathcal{C}_{\omega_P}\big(\mathcal{E}_{P}\big) \qquad \supset \qquad \mathcal{Z}_{\omega_{P}}\big(\mathcal{E}_{P}\big) \,\,  = \, \,  \langle G\rangle \, \, = \,\,  \left\langle \sum_{\xi \in \mathfrak{X}} \lambda_\xi\pi_{\xi} \right\rangle
% \end{equation}
Here $ \mathfrak{X}$ denotes a Hausdorff topological space	of orthogonal projections, $\pi_{\xi}$, on $\mathcal{H}$ with the properties
	\begin{equation}\label{partition-unity}
\pi_{\xi}\cdot \pi_{\eta} = \delta_{\xi \eta} \pi_{\xi}\ \,\,\mbox{for all } \xi, \eta \text{ in }\mathfrak{X}, \quad\mbox{and}\quad \sum_{\xi \in \mathfrak{X}} \pi_{\xi} = {\bf{1}}\,.
	\end{equation}
\begin{figure}[h!]
\centering
\includegraphics[width= 140mm]{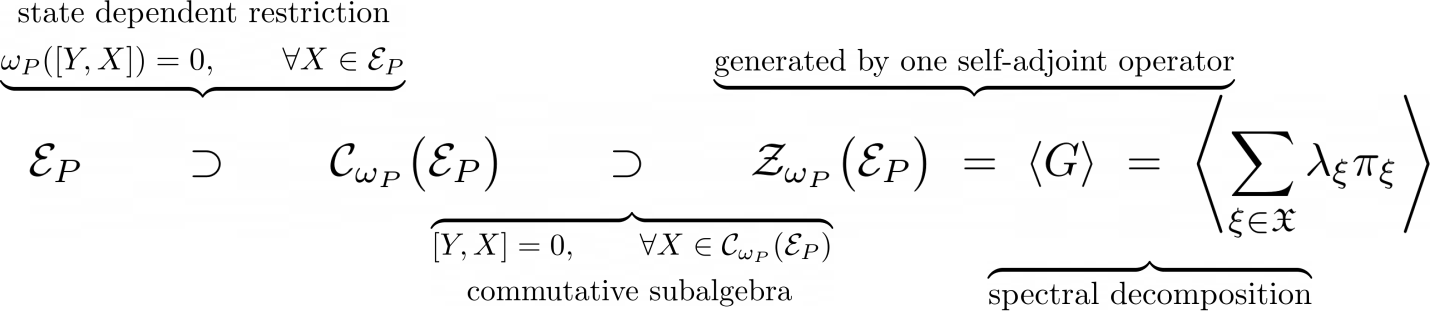}
\caption{Collection of the different algebraic structures required for the definition of an event. Here $G$ is the the single self-adjoint operator which generates the abelian von Neumann algebra $\mathcal{Z}_{\omega_P}(\mathcal{E}_P)$.}
\label{fig:eventdef}
\end{figure}
Given the complexity of the definitions involved in the definition of an event, Figure \ref{fig:eventdef} provides a collection of the different structures involved for an easier orientation how these structures fit together.\\
It is expected that events usually have a finite extent in spacetime (see~\cite{froehlich2019relativistic}).
This implies that the operators 
$\big\{ \pi_{\xi} \vert \xi \in \mathcal{X}\big\}$ representing a potential event in the future of the point $P$ would be localized in a compact region of spacetime contained in the future of $P$ (the future light-cone with apex at $P$). Accordingly, if we only look at the events in $\mathcal{M}_E(\omega)\subset\mathcal{M}$, hence if we eliminate all points from $\mathcal{M}$ except those for which, given a state $\omega$, an event begins in their immediate future, then it seems reasonable to expect, that $\mathcal{M}_E(\omega)$ satisfies axiom (c) of Causal Set Theory. 
However at this point we need to emphasize that the question regarding the ``size'' of events is an open problem. Though it seems plausible to assume that an event is no smaller than the Planck scale. As this is still an open question however, for simplicity we will assume in the following, that events are point like. 

Suppose at a spacetime point $P$ an event occurs according to definition \ref{act-event}. Now we discuss how this effects the further evolution of the system $S$ under consideration. Let $\omega_{P}$ be the state of an isolated system $S$ right before the spacetime point $P$. Let us suppose that an event $\lbrace \pi_{\xi}, \xi \in \mathcal{X} \rbrace$ generating $\mathcal{Z}_{\omega_P}(\mathcal{E}_{P})$ happens at $P$. The ETH approach requires the following Axiom.
\begin{axiom}\label{collapsaxiom} The actual state  of the system $S$ right after the event $\lbrace \pi_{\xi}, \xi \in \mathfrak{X} \rbrace$ starting at $P$ has happened is given by one of the states
\[ %\label{SCP}
\omega_{P, \xi_{*}}(\cdot):=[\omega_{P}(\pi_{\xi_{*}})]^{-1}\,\omega_{P}\big(\pi_{\xi_{*}} (\cdot) \pi_{\xi_{*}}\big)\,, \]
for some $\xi_{*} \in \mathfrak{X}$ with $\omega_{P}(\pi_{\xi_{*}})>0$. The probability for the system $S$ to be found in the state $\omega_{P,\xi_{*}}$ right after the event 
$\lbrace \pi_{\xi}, \xi \in \mathfrak{X} \rbrace$ starting at $P$ happened is given by Born's Rule, i.e., by
\begin{equation}\label{Born}
\text{\rm{prob}}\{\xi_{*}, P\} = \omega_{P}(\pi_{\xi_{*}}).
\end{equation}
\end{axiom}
Note that with this assumption, whenever an event happened the state is projected onto the eigenspace of the outcome of the event, the dynamics of the state in the ETH approach in the non-relativistic setting \cite{frohlich2015math} is essentially the same as that in quantum trajectory theory (QTT, see eg. \cite{brun2002simple}) with the difference that in QTT one has to invoke ``repeated measurements'' while in the ETH approach the projection, by Axiom \ref{collapsaxiom}, is naturally part of the dynamics, and there is no need to invoke an observer performing measurements. In fact it turns out to be the other way around in the context of the ETH approach to QT: a successful measurement is the recording of an event happening in the physical system. An instrument in a laboratory is a physical system that generates events in a predictable manner. 

Finally we consider the evolution of the state in the relativistic formulation of the ETH approach to QT in a  spacetime $\mathcal{M}$.  Let $P,P'$ be points in $\mathcal{M}$, let $\mathcal{Z}_{\omega_P}$ denote the center of the centralizer of the state $\omega_P$ on the algebra $\mathcal{E}_P$, which describes the event $\big\{\pi^{P}_{\xi} \vert \xi \in \mathcal{X}^{P} \big\}$ happening in the future of $P$, and let 
$\mathcal{Z}_{\omega_{P'}}$ be the algebra describing the event happening in the future of the point $P'$.
We now introduce the following axiom.

\begin{axiom}\label{axiom2}
(Events in the future of space-like separated points commute): Let $P\, \bigtimes \, P'$. Then all operators in $\mathcal{Z}_{\omega_P}$ commute with all operators in $\mathcal{Z}_{\omega_{P'}}$. In particular,
$$\hspace{2.2cm} \big[\pi^{P}_{\xi}, \pi^{P'}_{\eta}\big] =0, \,\,\,\forall\, \xi \in \mathcal{X}^{P}  \text{ and all }\, \eta \in \mathcal{X}^{P'}.$$
\end{axiom}

Following~\cite{froehlich2019relativistic}, this axiom may be one reflection of what people sometimes interpret as the fundamental non-locality of quantum theory: projection operators representing events in the future of two space-like separated points $P$ and $P'$ in spacetime are constrained to commute with each other! This can be seen as the equivalent condition of what Dowker calls ``discrete general covariance'' in the context of Causal Set Theory \cite{dowker2005causal} namely that the probability of growing a particular finite partial causet does not depend on the order in which the elements are ``born''. As we will discuss next, in the context of the ETH approach Axiom \ref{axiom2} guarantees that the probability of a particular event happening does not depend on the ``order'' in which spacelike separated events occur.

To be able to formulate the evolution of the system $S$ in terms of initial data, we consider the evolution of $S$ in a finite slice $\mathfrak{F}$ of the spacetime $\mathcal{M}$.  Assume that $\mathfrak{F}$ contains a space-like hypersurface 
$\Sigma_{0}$ on which there exists an initial state $\omega_{\Sigma_0}$ (for a detailed construction see \cite{froehlich2019relativistic}). Denote by $V^{-}_{P}(\mathfrak{F})$ all points in $\mathfrak{F}$ that lie in the past of $P$. Let $\mathfrak{I}( \mathfrak{F})$ be a set of indices labelling the points in $V^{-}_{P}(\mathfrak{F})$ in whose future events happen; it is here assumed to be countable.
Then 
$\big\{P_\iota \vert \iota \in \mathfrak{I}(\mathfrak{F})\big\}$ denotes the subset of points in $V^{-}_{P}(\mathfrak{F})$ in whose future events happen, and let 
$$\big\{ \pi_{\xi_\iota}^{P_\iota} \vert \iota \in \mathfrak{I}(\mathfrak{F})\big\} \subset \mathcal{E}_{\Sigma_0}$$ 
be the actual events that happen in the future of the points $P_\iota\,, \iota \in \mathfrak{I}(\mathfrak{F})$. Here $\mathcal{E}_{\Sigma_0}$ denotes the union of the algebras of pontentialities of all points in $\Sigma_0$. 
Fr\"ohlich \cite{froehlich2019relativistic} then defines a so-called \textit{``History Operator''} 
\begin{equation}\label{History}
H\big(V^{-}_{P}(\mathfrak{F})\big):= \vec{\Pi}_{\iota \in \mathfrak{I}(\mathfrak{F})}\, \pi_{\xi_\iota}^{P_\iota}\,,
\end{equation}
where the ordering in the product $\Vec{\Pi}$ is such that a factor $\pi_{\xi_\kappa}^{P_\kappa}$ corresponding to a point $P_{\kappa}$ stands to the right of a factor $\pi_{\xi_\iota}^{P_\iota}$ corresponding to a point $P_{\iota}$ if and only if $P_{\kappa} \prec P_{\iota}$ (i.e., if $P_{\kappa}$ is in the past of $P_{\iota}$). But if $P_{\iota} \bigtimes P_{\kappa}$, i.e., if $P_{\iota}$ and $P_{\kappa}$ are space-like separated, then the order of the two factors is irrelevant thanks to Axiom 2! The History Operator thus holds a record of all events that happened in the past of the point $P$.  In Figure \ref{fig:history} we give an schematic example of a history operator in a simple situation. 
\begin{figure}[t!]
\centering
\includegraphics[width= 80mm]{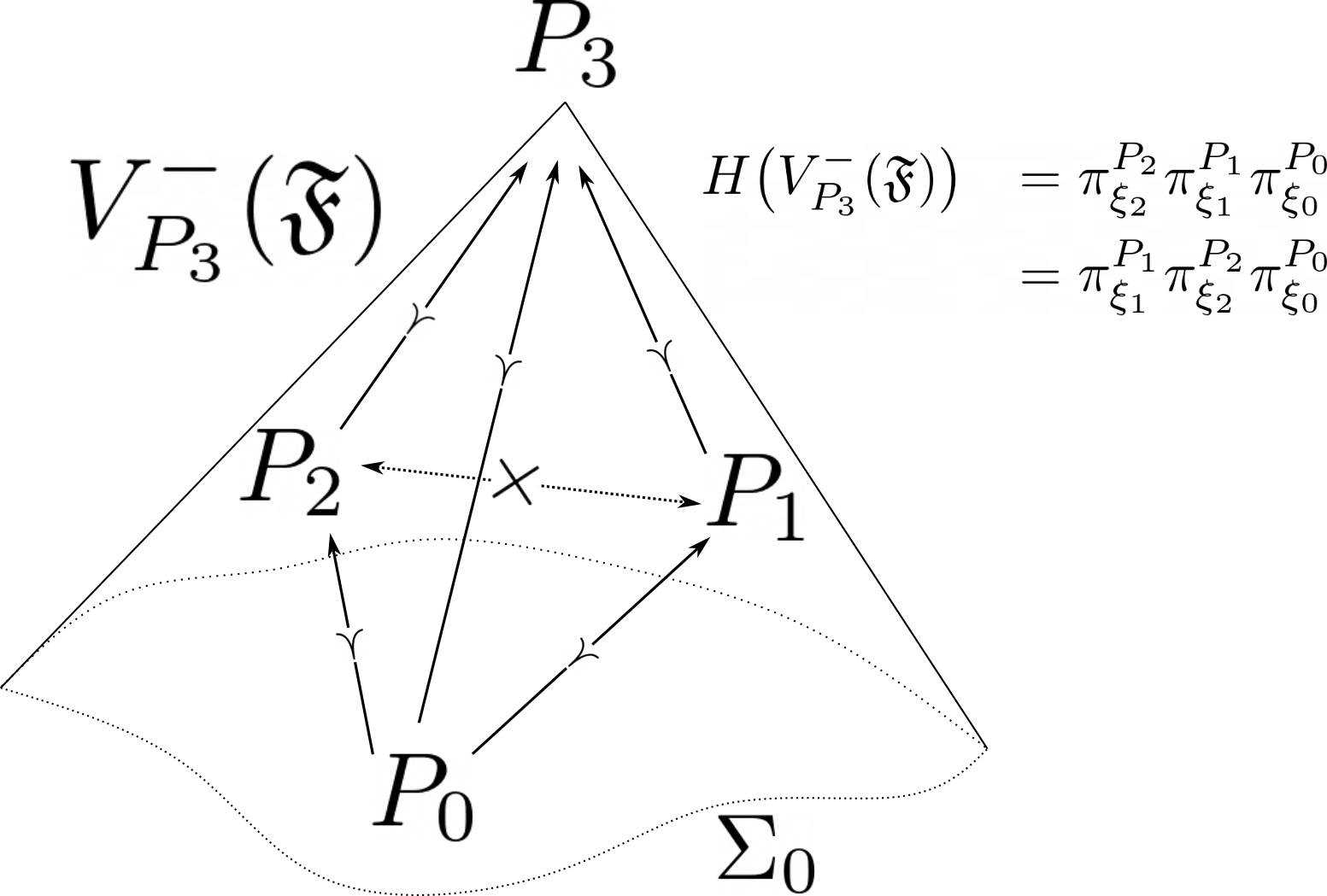}
\caption{A simple schematic example of a history operator for a point $P_3$ and initial data surface $\Sigma_0$, where only three events happened in the recent past $V^{-}_{P_3}(\mathfrak{F})$ of $P_3$   at the points $P_0$, $P_1$ and $P_2$. We assume the same causal relations between the four points as in Figure \ref{fig:causal}.}
\label{fig:history}
\end{figure}
The state on the algebra $\mathcal{E}_{P}$ relevant for making predictions about events happening in the future of $P$ is then given by
\begin{align}\label{state-propagation}
\omega_{P}(X) &\equiv \omega_{P}^{\mathfrak{F}}\big(X \big) \\
&= \big[\mathcal{N}_{P}^{\,\mathfrak{F}}\, \big]^{-1} 
\omega_{\Sigma_0}\big(H(V^{-}_{P}(\mathfrak{F}))^{*} \,X\, H(V^{-}_{P}(\mathfrak{F}))\big)\,, \nonumber
\end{align}
with $X\in \mathcal{E}_{P}$ and where the normalization factor $\mathcal{N}_{P}^{\,\mathfrak{F}}$ is given by
\[ %\label{hist-prob}
\mathcal{N}_{P}^{\,\mathfrak{F}}= \omega_{\Sigma_0}\big(H(V^{-}_{P}(\mathfrak{F}))^{*} \cdot H(V^{-}_{P}(\mathfrak{F}))\big)\,. \]

The quantities $\mathcal{N}_{P}^{\,\mathfrak{F}}$ can be used to equip the tree-like space (what Fr\"ohlich 
in~\cite{froehlich2019review} refers to as the ``non-commutative spectrum'' of $S$) of all possible histories of events in the future of $\Sigma_0$ with a probability measure. It would be interesting to investigate how the ``non-commutative spectrum'' in the ETH approach relates to the ``law of growth'' in the context of Causal Set Theory \cite{dowker2005causal}. However this is beyond the scope of this paper.

This are all the details one needs to know about the ETH approach with regards to the following considerations. 
\section{The Semi-Classical Set-Up}\label{sec:setup}

For the key argument in this paper we consider a semi-classical setup in the following sense. We assume there exists a classical time orientable manifold $(M,g)$ as a background on which we study a quantum system which features massless modes and follows the ETH dynamics. This is similar to the setup used in the proof of the ``Principle of Diminishing Potentialities'' by  Buchholz and Roberts \cite{buchholz2014new} in the context of four-dimensional Minkowski space. We conjecture that this result extends to quantum theories with massless modes on generic well behaved four dimensional spacetimes and thus the causal structure provided by the ETH approach to QT to be compatible with the causal structure of the background manifold. In the following we will use this conjecture as an assumption to discuss the situation of spacetimes with closed causal curves.  
\begin{Conjecture}[Causal Compatibility]\label{future2}Let $(M,g)$ be a suitably well behaved spacetime manifold. Consider a quantum theory containing massless modes. Then we have that for two points $x,y\in M$, $J^+(x)\subset J^+ (y)$ implies $\mathcal{E}_x\subset \mathcal{E}_y$.
\end{Conjecture}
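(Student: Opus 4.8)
The plan is to reduce the conjecture to \emph{isotony} of a net of local algebras, by realizing the abstract potentiality algebras $\mathcal{E}_P$ concretely as the observable algebras attached to the causal futures. Specifically, I would posit that $\mathcal{E}_P$ coincides with the von Neumann algebra $\mathfrak{A}\big(J^+(P)\big)$ generated by all observables localizable in $J^+(P)$. Granting this identification, the implication is immediate: from $J^+(x)\subset J^+(y)$, isotony of the net yields $\mathfrak{A}\big(J^+(x)\big)\subset \mathfrak{A}\big(J^+(y)\big)$, that is $\mathcal{E}_x\subset \mathcal{E}_y$. This is moreover consistent with the Principle of Diminishing Potentialities behind Definition \ref{def:timelikeETH}: a point $x$ lying causally to the future of $y$ has $J^+(x)\subset J^+(y)$ and should carry fewer potentialities, hence a smaller algebra.

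To make this rigorous on a curved background I would work inside the framework of locally covariant quantum field theory (Brunetti--Fredenhagen--Verch), which assigns to each globally hyperbolic subregion of $(M,g)$ a $C^*$- or von Neumann algebra and guarantees isotony and Einstein causality functorially. For the massless content I would take a free Maxwell-type field whose infrared sector mimics that of the quantum electrodynamics treated by Buchholz and Roberts \cite{buchholz2014new}, pass to the weak closures of the algebras generated by field operators smeared against test sections supported in $J^+(P)$, and identify these closures with the $\mathcal{E}_P$. The relation $\succ$ of Definition \ref{def:timelikeETH} would then be checked directly against this concrete net.

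The hard part, and the reason this is stated as a conjecture rather than a theorem, is \emph{justifying the identification} $\mathcal{E}_P=\mathfrak{A}\big(J^+(P)\big)$ together with the structural data that Definition \ref{def:timelikeETH} demands: the strict inclusion $\mathcal{E}_{P'}\subsetneqq \mathcal{E}_P$ and the infinite-dimensional, non-commutative relative commutant $\mathcal{E}'_{P'}\cap\mathcal{E}_P$. This is precisely the content of the Buchholz--Roberts theorem, whose proof is tied to four-dimensional Minkowski space: it relies on Poincar\'e covariance, the positive-energy spectrum condition, and the explicit soft-photon/infrared structure of massless fields to show that the cone algebras are simultaneously large enough and genuinely shrinking toward the future. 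On a generic $(M,g)$ none of these ingredients survives verbatim --- Poincar\'e symmetry is lost, the global spectrum condition must be replaced by a microlocal (Hadamard) condition, and, most delicately, the causal futures $J^+(P)$ may behave pathologically, which is exactly the regime of closed causal curves that motivates this paper. I expect the genuine obstacle to be adapting the infrared analysis of Buchholz--Roberts to the locally covariant setting and controlling it on spacetimes with degenerate causal structure; this is where the problem becomes genuinely open, and why the statement is retained as an assumption rather than established.
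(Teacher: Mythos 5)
There is no proof in the paper to compare against: the statement you were given is the Causal Compatibility \emph{Conjecture}, which the paper explicitly leaves open and uses only as an assumption in Propositions \ref{past} and \ref{equal}. The paper motivates it in three ways (the Buchholz--Roberts theorem in Minkowski space, necessity for compatibility of the two causal structures, and the comparison with Causal Fermion Systems) and then discusses why it is hard, but it never proves it. Your proposal likewise does not prove it, and to your credit you say so. The structural gap in your argument is that it front-loads the entire content of the conjecture into the unproven identification $\mathcal{E}_P=\mathfrak{A}\big(J^+(P)\big)$: once you \emph{posit} that the potentiality algebras are the local algebras of the causal futures, the inclusion $\mathcal{E}_x\subset\mathcal{E}_y$ is just isotony and there is nothing left to show. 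But in the ETH approach the $\mathcal{E}_P$ are abstract data of the system $S$, and the causal relation $\succ$ is \emph{defined} by the algebra inclusions of Definition \ref{def:timelikeETH}, not by the spacetime geometry; the conjecture is precisely the claim that these two a priori independent structures line up. Asserting the identification is therefore a restatement of the conjecture, not a reduction of it.

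That said, your diagnosis of where the difficulty lives is sound and complementary to the paper's. You phrase the obstacle in algebraic terms: the Buchholz--Roberts construction leans on Poincar\'e covariance, the spectrum condition, and the infrared structure of massless fields, none of which survive verbatim on a generic $(M,g)$, and the locally covariant (Brunetti--Fredenhagen--Verch) framework is the natural place to attempt a replacement. The paper instead phrases the obstacle in PDE terms: the conjecture (more precisely, the strictness and relative-commutant requirements behind Definition \ref{def:timelikeETH}) translates into the existence of solutions of the field equations sourced in the causal diamond between $y$ and $x$ that vanish identically in $J^+(x)$; Huygens' principle guarantees this in Minkowski space and its conformal relatives, but backscattering destroys it on generic backgrounds, leaving at best polynomial decay, with only partial hope from high-frequency/Gaussian-beam constructions and backwards characteristic initial value problems. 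These are two views of the same open problem, and neither you nor the paper closes it. If you want your write-up to be maximally useful, you should present it explicitly as a motivation and a research programme rather than as a proof proposal, and you might add the paper's PDE reformulation as a concrete test of whether your proposed net of cone algebras actually satisfies the Principle of Diminishing Potentialities on curved backgrounds.
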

The Conjecture \ref{future2} is motivated in three ways.  First, by the result of Buchholz and Roberts \cite{buchholz2014new} in Minkowski space. Second, this is a necessary assumption for the causal structure of the ETH approach to QT to be compatible with the causal structure of the background spacetime. Classically $J^+(x)\subset J^+ (y)$ implies that $x$ is in the future of $y$. In the ETH approach to QT $\mathcal{E}_x\subset \mathcal{E}_y$ is compatible with the statement that $x$ is in the future of $y$. The third motivation comes from a recent paper \cite{cfseth} comparing the ETH approach to QT with the Causal Fermion Systems theory (see \cite{website} for an introduction). In the CFS setting the relevant future algebras are constructed explicitly from the future light cone. In PDE language the conjecture translates to the question, whether there exist solutions to the corresponding field equations which have a source function that is compactly supported in the causal diamond between $y$ and $x$ and which are equal to zero in the future of $x$. In Minkowski space, and spacetimes (locally) conformal to it, this is guaranteed for every solution by Huygens principle \cite{friedlander1976wave}. In general spacetimes Huygens principle breaks down due to backscattering, which has been particularly well studied in the context of black hole spacetimes (see e.g.\cite{hod1999high, hod2000radiative}. Due to backscattering, for generic solutions one can at most have polynomial decay in time. This is the main obstacle for the causal compatibility conjecture. There is hope however, that particular solutions exist, that satisfy the conditions mentioned above. On the one hand, in the high frequency approximation it is known  \cite{sbierski2015characterisation, ralston1969solutions} that for any finite time there exist solutions to the wave equation focused along a chosen null geodesics. Hence, if there exists a null geodesics $\gamma\in J^+(y)\backslash J^+(x)$ then for every finite time there exists a solution that is almost zero in $J^+(x)$. However, this is of course not good enough, but on the other hand, we can consider the backwards characteristic initial value problem, with initial data on the boundary of $ J^+(x)$ set to zero and chosen arbitrarily on the intersection of $J^+(x)\backslash J^+(y)$ with future null infinity (or $\partial M$ respectively\footnote{In the context of black hole spacetimes and cosmological spacetimes it is unclear, whether one should consider the entire future of $x$ or only the part intersecting with the domain of outer communication. Though likely this will not change the answer to the conjecture.}). It is unclear however whether the intersection of this set of solutions with those we are interested in is non-empty.

\begin{Prp}
\label{past}
Assuming the Causal Compatibility Conjecture to hold, $x,y\in M$, then if $J^-(y)\subset J^-(x)$ there exists an admissible ordering of the history operators $H\big(V^{-}_{x}(\mathfrak{F})\big)$, $H\big(V^{-}_{y}(\mathfrak{F})\big)$ such that $H\big(V^{-}_{y}(\mathfrak{F})\big)$ is a factor of $H\big(V^{-}_{x}(\mathfrak{F})\big)$. In particular we have 
\begin{equation}
    H\big(V^{-}_{x}(\mathfrak{F})\big) =  \vec{\Pi}_{\iota \in \left[\mathfrak{I}_x(\mathfrak{F})\backslash  \mathfrak{I}_y(\mathfrak{F})\right] }\, \pi_{\xi_\iota}^{P_\iota}\,\, \cdot H\big(V^{-}_{y}(\mathfrak{F})\big).
\end{equation}
\end{Prp}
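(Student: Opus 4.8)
The plan is to reduce the statement to a combinatorial claim about admissible orderings of the product in the History Operator \eqref{History}, and to feed that claim with two causal-algebraic inputs: the Causal Compatibility Conjecture \ref{future2} and the transitivity of the relation $\prec$.

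First I would translate the spacetime hypothesis into the ETH causal order. Note that $J^-(y)\subset J^-(x)$ is equivalent to $J^+(x)\subset J^+(y)$: since $y\in J^-(y)\subset J^-(x)$ one gets $x\in J^+(y)$, and transitivity of the causal relation then propagates the inclusion in either direction. Hence Conjecture \ref{future2} applies and yields $\mathcal{E}_x\subset \mathcal{E}_y$. I would then upgrade this to a statement about $\prec$ using Definition \ref{def:timelikeETH}: for any point $P$ with $P\prec y$ one has $\mathcal{E}_y\subsetneqq\mathcal{E}_P$, so that $\mathcal{E}_x\subseteq\mathcal{E}_y\subsetneqq\mathcal{E}_P$ forces the strict inclusion $\mathcal{E}_x\subsetneqq\mathcal{E}_P$; the required infinite-dimensional non-commutativity of $\mathcal{E}'_x\cap\mathcal{E}_P$ is inherited, because $\mathcal{E}_x\subset\mathcal{E}_y$ gives $\mathcal{E}'_y\subset\mathcal{E}'_x$ and hence $\mathcal{E}'_y\cap\mathcal{E}_P\subset\mathcal{E}'_x\cap\mathcal{E}_P$. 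By Definition \ref{def:timelikeETH} this means $P\prec x$. Consequently $V^-_y(\mathfrak{F})\subset V^-_x(\mathfrak{F})$ and, restricting to points in whose future an event happens, $\mathfrak{I}_y(\mathfrak{F})\subset\mathfrak{I}_x(\mathfrak{F})$. Thus every factor occurring in $H(V^-_y(\mathfrak{F}))$ already occurs in $H(V^-_x(\mathfrak{F}))$, and the factors indexed by $\mathfrak{I}_x(\mathfrak{F})\backslash\mathfrak{I}_y(\mathfrak{F})$ are exactly the ``extra'' ones appearing in the claimed identity.

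The core of the proof is then to show that these extra factors may be moved, by an admissible reordering, to the left of all factors belonging to $H(V^-_y(\mathfrak{F}))$. Here I would argue by contradiction using transitivity. Recall the ordering rule for \eqref{History}: a factor $\pi_{\xi_\kappa}^{P_\kappa}$ stands to the right of $\pi_{\xi_\iota}^{P_\iota}$ iff $P_\kappa\prec P_\iota$, while for spacelike separated points the order is immaterial by Axiom \ref{axiom2}. Suppose $P_\iota\in\mathfrak{I}_x(\mathfrak{F})\backslash\mathfrak{I}_y(\mathfrak{F})$ and $P_\kappa\in\mathfrak{I}_y(\mathfrak{F})$ are causally related. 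If $P_\iota\prec P_\kappa$, then since $P_\kappa\prec y$, transitivity would give $P_\iota\prec y$, i.e.\ $P_\iota\in V^-_y(\mathfrak{F})$, contradicting $P_\iota\notin\mathfrak{I}_y(\mathfrak{F})$. Hence the only possibility for a causally related pair is $P_\kappa\prec P_\iota$, which by the ordering rule places the $V^-_y$-factor $\pi_{\xi_\kappa}^{P_\kappa}$ to the right of the extra factor $\pi_{\xi_\iota}^{P_\iota}$; spacelike pairs impose no constraint. Therefore grouping all extra factors on the left while retaining the internal admissible order of $H(V^-_y(\mathfrak{F}))$ on the right violates no ordering constraint, so it is itself an admissible ordering. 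This yields precisely the factorisation in the statement.

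The main obstacle I expect is not the algebra but the ordering step: one must be sure that the History Operator, which is only defined up to admissible permutations, genuinely admits the particular grouping claimed, so that $H(V^-_y(\mathfrak{F}))$ survives intact as a right factor rather than being interleaved with the extra projections. The transitivity argument above is what rules out such interleaving, and Axiom \ref{axiom2} is what guarantees that the remaining freedom in ordering spacelike pairs does not obstruct the grouping. A secondary point requiring care is the faithful transport of the two clauses of Definition \ref{def:timelikeETH} --- strict inclusion and infinite-dimensional non-commutativity --- along the chain $\mathcal{E}_x\subseteq\mathcal{E}_y\subsetneqq\mathcal{E}_P$; this is where the inclusion-reversing property of the commutant is used, and it is the step that makes the passage from the Conjecture's conclusion $\mathcal{E}_x\subset\mathcal{E}_y$ to the order-theoretic containment $V^-_y(\mathfrak{F})\subset V^-_x(\mathfrak{F})$ rigorous.
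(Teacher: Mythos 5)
Your proposal is correct and takes essentially the same route as the paper's proof: the key point in both is that no event point in $V^-_x(\mathfrak{F})\backslash V^-_y(\mathfrak{F})$ can lie in the past of an event point in $V^-_y(\mathfrak{F})$ (only in its future or spacelike separated), so the ordering rule for \eqref{History} together with Axiom \ref{axiom2} lets all the extra projections be grouped to the left of $H\big(V^{-}_{y}(\mathfrak{F})\big)$. The only difference is one of detail: you spell out, via the inclusion-reversal of the commutant, how the Conjecture's conclusion $\mathcal{E}_x\subset\mathcal{E}_y$ propagates to the containment of the ETH pasts, a step the paper leaves implicit by working directly with $J^-$.
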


\begin{proof}
 The proposition is a simple application of Axiom \ref{axiom2} combined with the Causal Compatibility Conjecture \ref{future2}. First note that points in $J^-(x)\backslash J^-(y)$ are never in the past of points in $J^-(y)$ but either in the future of them or spacelike separated. The Causal Compatibility Conjecture together with Axiom \ref{axiom2} imply that we either have to (if in the future) or can (if spacelike separated) write all the projections corresponding to events that happen in the spacetime region $ J^-(x)\backslash J^-(y)$ to the left of those projections corresponding to events that happen in $J^-(y)$.
\end{proof}

\section{Closed Causal Curves in the Semi-Classical Regime}\label{sec:ccc}
We will now discuss the physical nature of closed causal curves in the above defined semi-classical regime and we will show that, assuming the Causal Compatibility Conjecture to hold, the causal structure of the relativistic formulation of the ETH approach, despite not allowing for closed causal chains, is indeed compatible with the causal structure of the spacetime even in the presence of closed causal curves. 

Because we aim to discuss closed causal curves we will deviate slightly from the formulation of the relativistic QT in the ETH approach in \cite{froehlich2019relativistic} in that we will assume a general state $\omega$ on the global algebra $\mathcal{E}$ to be a normalized, positive linear functional on $\mathcal{E}$. This is in contrast to Section \ref{sec:ETH} where we defined a state $\omega_{\Sigma_0}$ with respect to a particular Cauchy surface. Accordingly we replace $\mathfrak{F}$ with $\mathcal{M}$ in the definition of the history operator and we assume the events in the past of every point to be countable. This setup is a simplification to avoid a technical discussion about the choice of a relevant initial data surface in a spacetime with closed causal curves. We can then prove the following Proposition. 
\begin{Prp}\label{equal}
Let $x$ and $y$ be two arbitrary distinct points on a closed causal curve. Given Assumption \ref{future2} and Assumption \ref{past} we have that 
\begin{equation}
    \omega_x(X)=\omega_y(X), \qquad X\in \mathcal{E}_x = \mathcal{E}_y.
\end{equation}
\end{Prp}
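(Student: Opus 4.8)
The plan is to reduce the assertion to two purely geometric facts about a closed causal curve, namely that $x$ and $y$ share the same causal future and the same causal past, and then feed these into the Causal Compatibility Conjecture \ref{future2} and into Proposition \ref{past} to equate first the algebras and then the states.

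First I would establish the geometry. If $x$ and $y$ both lie on a closed causal curve $\gamma$, then following $\gamma$ from $x$ one reaches $y$, while continuing along $\gamma$ past $y$ one returns to $x$; hence $y\in J^+(x)$ and $x\in J^+(y)$ hold simultaneously. From $y\in J^+(x)$ one obtains $J^+(y)\subseteq J^+(x)$ by concatenating the causal arc from $x$ to $y$ with any future directed causal curve issuing from $y$, and the symmetric statement gives $J^+(x)\subseteq J^+(y)$, so that $J^+(x)=J^+(y)$. Running the same argument for the past (or dualising) yields $J^-(x)=J^-(y)$.

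For the algebras I would apply the Causal Compatibility Conjecture \ref{future2} to each of the two inclusions $J^+(x)\subseteq J^+(y)$ and $J^+(y)\subseteq J^+(x)$, obtaining $\mathcal{E}_x\subseteq\mathcal{E}_y$ and $\mathcal{E}_y\subseteq\mathcal{E}_x$ respectively; together these give $\mathcal{E}_x=\mathcal{E}_y$, which is exactly what justifies writing $X\in\mathcal{E}_x=\mathcal{E}_y$ in the statement. For the states I would invoke Proposition \ref{past}. Since $J^-(x)=J^-(y)$ we have in particular $J^-(y)\subseteq J^-(x)$, so the past point sets $V^-_x(\mathcal{M})$ and $V^-_y(\mathcal{M})$ agree and the index sets of events happening in those pasts coincide; the difference $\mathfrak{I}_x(\mathcal{M})\setminus\mathfrak{I}_y(\mathcal{M})$ is empty, the leading product $\vec{\Pi}$ in the factorisation of Proposition \ref{past} is the empty product, and hence $H\big(V^-_x(\mathcal{M})\big)=H\big(V^-_y(\mathcal{M})\big)$ as operators. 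The normalisations then agree as well, $\mathcal{N}_x=\omega\big(H(V^-_x(\mathcal{M}))^*H(V^-_x(\mathcal{M}))\big)=\omega\big(H(V^-_y(\mathcal{M}))^*H(V^-_y(\mathcal{M}))\big)=\mathcal{N}_y$. Substituting these identities into the state propagation formula yields $\omega_x(X)=\mathcal{N}_x^{-1}\,\omega\big(H(V^-_x(\mathcal{M}))^*\,X\,H(V^-_x(\mathcal{M}))\big)=\mathcal{N}_y^{-1}\,\omega\big(H(V^-_y(\mathcal{M}))^*\,X\,H(V^-_y(\mathcal{M}))\big)=\omega_y(X)$ for every $X$ in the common algebra, as claimed.

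The step I expect to be the genuine obstacle is not the chain of inclusions but the well-definedness of the history operators on a closed causal curve. Because $x\in J^-(x)$ and $y\in J^-(x)$, the relation $\prec$ is circular along $\gamma$ and violates the non-circularity axiom, so the notion of an \emph{admissible ordering} that underlies both the definition of $H$ and Proposition \ref{past} needs to be argued to still make sense here. The cleanest route is to observe that Proposition \ref{past} only ever uses the nesting (indeed equality) of the past sets and that Axiom \ref{axiom2} makes the ordering of the remaining factors irrelevant up to commutation; one must nevertheless verify carefully that the product defining $H\big(V^-_x(\mathcal{M})\big)$ is genuinely unambiguous in this regime, since it is precisely the circularity of the causal structure that the main theorem of the paper is designed to defuse.
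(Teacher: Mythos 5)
Your proposal is correct and follows the same overall skeleton as the paper's proof: the mutual inclusions $x\in J^{\pm}(y)$, $y\in J^{\pm}(x)$, then Conjecture \ref{future2} to get $\mathcal{E}_x=\mathcal{E}_y$, then Proposition \ref{past} to get $H\big(V^-_x(\mathcal{M})\big)=H\big(V^-_y(\mathcal{M})\big)$, then the state propagation formula \eqref{state-propagation}. The one place where you genuinely diverge is the middle step: you conclude that the difference sets $\mathfrak{I}_x(\mathcal{M})\setminus\mathfrak{I}_y(\mathcal{M})$ and $\mathfrak{I}_y(\mathcal{M})\setminus\mathfrak{I}_x(\mathcal{M})$ are empty directly from the set equality $J^-(x)=J^-(y)$, so the extra prefactor in Proposition \ref{past} is an empty product. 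The paper instead applies Proposition \ref{past} in both directions, substitutes the two factorisations into one another to obtain that the product of the two prefactors equals $\mathds{1}$, and then argues algebraically that a product of projections can only be invertible if it is trivial, forcing the index difference sets to be empty. Your route is more elementary and arguably cleaner, since whether an event happens in the future of a point in $V^-_x(\mathcal{M})=V^-_y(\mathcal{M})$ does not depend on whether that set is labelled via $x$ or via $y$; the paper's route has the merit of using only the one-sided statement actually proved in Proposition \ref{past}, without re-examining how the index sets are defined. Your closing caveat about the well-definedness of the admissible ordering when $\prec$ becomes circular along $\gamma$ is a fair observation, but it applies equally to the paper's own proof and is implicitly absorbed there into the simplifying assumptions announced at the start of Section \ref{sec:ccc}; it is not a gap specific to your argument.
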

\begin{proof}
 We start by observing that $x\in J^+(y)$, therefore $J^+(x)\subset J^+(y)$, and $y\in J^+(x)$, therefore $J^+(y)\subset J^+(x)$, and thus by Assumption \ref{future2} we have 
\begin{equation}\label{eq:equalfuture}
    \mathcal{E}_x\subset \mathcal{E}_y \text{ and } \mathcal{E}_y\subset \mathcal{E}_x, 
\end{equation}
which can be combined to give  $\mathcal{E}_x = \mathcal{E}_y$. 

Next, we have that $x\in J^-(y)$, therefore $J^-(x)\subset J^-(y)$, and $y\in J^-(x)$, therefore $J^-(y)\subset J^-(x)$, and thus by Assumption \ref{past} we have
\begin{equation}\label{eq:equalpast1}
H\big(V^{-}_{x}(\mathcal{M})\big) =  \vec{\Pi}_{\iota \in \left[\mathfrak{I}_x(\mathcal{M})\backslash  \mathfrak{I}_y(\mathcal{M})\right] }\, \pi_{\xi_\iota}^{P_\iota}\,\, \cdot H\big(V^{-}_{y}(\mathcal{M})\big) 
\end{equation}
and 
\begin{equation}\label{eq:equalpast2}
H\big(V^{-}_{y}(\mathcal{M})\big) =  \vec{\Pi}_{\iota \in \left[\mathfrak{I}_y(\mathcal{M})\backslash  \mathfrak{I}_x(\mathcal{M})\right] }\, \pi_{\xi_\iota}^{P_\iota}\,\, \cdot H\big(V^{-}_{x}(\mathcal{M})\big).
\end{equation}
Plugging \eqref{eq:equalpast1} into the right hand side of \eqref{eq:equalpast2} and vice versa gives
\begin{align}
\vec{\Pi}_{\iota \in \left[\mathfrak{I}_x(\mathcal{M})\backslash  \mathfrak{I}_y(\mathcal{M})\right] }\, \pi_{\xi_\iota}^{P_\iota}\,\, \cdot \vec{\Pi}_{\iota \in \left[\mathfrak{I}_y(\mathcal{M})\backslash  \mathfrak{I}_x(\mathcal{M})\right] }\, \pi_{\xi_\iota}^{P_\iota}\,\, &=\\ \vec{\Pi}_{\iota \in \left[\mathfrak{I}_y(\mathcal{M})\backslash  \mathfrak{I}_x(\mathcal{M})\right] }\, \pi_{\xi_\iota}^{P_\iota}\,\,\cdot \vec{\Pi}_{\iota \in \left[\mathfrak{I}_x(\mathcal{M})\backslash  \mathfrak{I}_y(\mathcal{M})\right] }\, \pi_{\xi_\iota}^{P_\iota}\,\,&= \mathds{1}
\end{align}
observing, that both $\vec{\Pi}_{\iota \in \left[\mathfrak{I}_x(\mathcal{M})\backslash  \mathfrak{I}_y(\mathcal{M})\right] }\, \pi_{\xi_\iota}^{P_\iota}\,\, $ and $ \vec{\Pi}_{\iota \in \left[\mathfrak{I}_y(\mathcal{M})\backslash  \mathfrak{I}_x(\mathcal{M})\right] }\, \pi_{\xi_\iota}^{P_\iota}\,\,$ are products of projection operators and the identity is the only projection operator that is invertible we get that the sets $\mathfrak{I}_x(\mathcal{M})\backslash  \mathfrak{I}_y(\mathcal{M})$ and $\mathfrak{I}_y(\mathcal{M})\backslash  \mathfrak{I}_x(\mathcal{M})$ have to be empty and thus we get immediately
\begin{equation}\label{eq:equalpast}
 H\big(V^{-}_{y}(\mathcal{M})\big)=H\big(V^{-}_{x}(\mathcal{M})\big). 
\end{equation}
Finally using the general state $\omega$ in \eqref{state-propagation} completes the result. 
\end{proof}
By Proposition \ref{equal} all points that lie along the same closed causal curve share the same algebra of potentialities and the same state. Therefore not a single event will happen along a closed causal curve and as a consequence all points on these curves are physically indistinguishable. This means that there does not exist a single observation that any kind of observer could carry out to distinguish between different points along a closed causal curve. As a consequence it resolves the tension in the semi-classical setup between the causal structure in the relativistic formulation of the ETH approach to QT that does not allow for closed causal chains and the causal structure in the underlying spacetime manifold featuring closed causal curves.   \\
The result discussed in this section can be understood as a \textit{quasi passive state}. A passive state, as introduced in \cite{froehlich2019review} for the non-relativistic formulation of the ETH approach, is a state such that for any time $t>t_0$ the center of the centralizer is trivial and hence no further events occur after the time $t_0$. What was showed here is that no event can occur within any region that features closed causal curves. However, in principle one can imagine that a causal curve can leave the region where closed causal curves exist. Apriori there is nothing preventing events from happening at points along the causal curve that lie in the future but outside of the region where closed causal curves exist. Therefore the states discussed here are \textit{quasi passive} in the sense that there exists a macroscopic region where no events occur but, in principle, events can occur in the future of this region.

\section{Conclusion}\label{sec:conclusion} 
In the present paper we demonstrated that, assuming the Causal Compatibility Conjecture, points on closed causal curves are physically indistinguishable in the semi classical formulation of the ETH approach to QT and therefore closed causal curves are physically irrelevant. We thus showed that in a region where closed causal curves exist, all states are quasi passive states. 

This result is in line with the philosophy in Hawking's Chronology protection Conjecture \cite{hawking1992chronology} as it tells us that in principle we could remove the non-chronal region from the spacetime without changing anything about the physics outside of that region. Note that this is in stark contrast to the modifications to the standard formulation of QT studied in \cite{hartle1994unitarity} where the existence of closed causal curves at some point in the future already changes the physical dynamics in the present. Note also that the present result is in contrast to the model by Deutsch \cite{deutsch1991quantum} in the following way: His model assumes that an interaction occurred inside the non-chronal region between a quantum system and a past version of itself. However, for an interaction and especially for any sort of successful measurement to occur in the context of the ETH approach to QT, the occurrence of an event is required. In the present paper we showed that there can not be any events in the non-chronal region and hence the assumptions underlying \cite{deutsch1991quantum} do not apply in this context. 

(Quasi) passive states as discussed in the present paper are an interesting aspect of the ETH approach to QT as, according to Fr\"ohlich \cite{froehlich2019review}, thermal equilibrium states are passive states.  Particularly in the cosmological setting thermal (quasi) equilibrium states might play a role if, along the lines of \cite{padmanabhan2017cosmic,padmanabhan2017atoms,padmanabhan2017we} one thinks of the evolution of the universe as a sort of transition between a (quasi) thermal state with a large cosmological constant $\Lambda$ and a thermal state with a small cosmological constant. Ideas along this line have also been explored in \cite{paganini2020proposal}. Thus the properties of passive states in the ETH approach to QT should be studied in more detail.   

Finally, the argument presented here resolves the tension between the fact that in the full relativistic formulation of the ETH approach to QT closed causal chains do not exist and the closed causal curves present in the underlying spacetime in the semi classical setup. It can be expected that closed causal curves in General Relativity are a pathology stemming from the fact that the theory fails to take quantum effects into account. It thus indicates that a complete quantum theory of gravity will not admit time travel as a physical phenomenon. 

\section*{Acknowledgment} 
C.F.P. is funded by the SNSF grant  P2SKP2 178198. I would like to thank J\"urg Fr\"ohlich for a patient explanation of his ETH approach and for many interesting discussions, further I would like to thank Isha Kotecha, Markus Strehlau, Felix Finster and Marco Oppio for valuable feedback.

\bibliographystyle{amsplain}
\bibliography{claudio}

\providecommand{\bysame}{\leavevmode\hbox to3em{\hrulefill}\thinspace}
\providecommand{\MR}{\relax\ifhmode\unskip\space\fi MR }
% \MRhref is called by the amsart/book/proc definition of \MR.
\providecommand{\MRhref}[2]{%
  \href{http://www.ams.org/mathscinet-getitem?mr=#1}{#2}
}
\providecommand{\href}[2]{#2}
\begin{thebibliography}{10}

\bibitem{website}
\emph{\emph{Link to web platform on causal fermion systems:
  www.causal-fermion-system.com}}.

\bibitem{brun2002simple}
Todd~A Brun, \emph{A simple model of quantum trajectories}, American Journal of
  Physics \textbf{70} (2002), no.~7, 719--737.

\bibitem{buchholz2014new}
Detlev Buchholz and John~E Roberts, \emph{New light on infrared problems:
  sectors, statistics, symmetries and spectrum}, Communications in Mathematical
  Physics \textbf{330} (2014), no.~3, 935--972.

\bibitem{deutsch1991quantum}
David Deutsch, \emph{Quantum mechanics near closed timelike lines}, Physical
  Review D \textbf{44} (1991), no.~10, 3197.

\bibitem{dowker2005causal}
Fay Dowker, \emph{Causal sets and the deep structure of spacetime}, A. Ashtekar
  (ed.) \textbf{100} (2005), 445--467.

\bibitem{cfseth}
Felix Finster, J\"urg Fr\"ohlich, Marco Oppio, and Claudio~F. Paganini,
  \emph{{Causal Fermion Systems and the ETH Approach to Quantum Theory}}, 2020.

\bibitem{friedlander1976wave}
Friedrich~Gerard Friedlander, \emph{The wave equation on a curved space-time},
  vol.~2, Cambridge university press, 1976.

\bibitem{F}
J.~Fr{\"o}hlich, \emph{``{D}iminishing potentialities'', entanglement,
  ``purification'' and the emergence of \underline{events} in quantum mechanics
  -- a simple model}, Sect. 5.6 of Notes for a course on Quantum Theory at
  LMU-Munich (Nov./Dec. 2019).

\bibitem{froehlich2019review}
J\"urg Fr\"ohlich, \emph{{A Brief Review of the ``ETH-Approach to Quantum
  Mechanics"}}, arXiv preprint arXiv:1905.06603 (2019).

\bibitem{froehlich2019relativistic}
\bysame, \emph{{Relativistic Quantum Theory}}, arXiv preprint arXiv:1912.00726
  (2019).

\bibitem{frohlich2021time}
J{\"u}rg Fr{\"o}hlich and Alessandro Pizzo, \emph{The time-evolution of states
  in quantum mechanics}, arXiv preprint arXiv:2101.01044 (2021).

\bibitem{frohlich2015math}
J{\"u}rg Fr{\"o}hlich and Baptiste Schubnel, \emph{Quantum probability theory
  and the foundations of quantum mechanics}, The Message of Quantum Science,
  Springer, 2015, pp.~131--193.

\bibitem{godel2003example}
Kurt G{\"o}del, \emph{{An Example of a New Type of Cosmological Solutions of
  Einstein’s Field Equations of}}, Nonlinear Gravitodynamics: The
  Lense-Thirring Effect: a Documentary Introduction to Current Research (2003),
  389.

\bibitem{hartle1994unitarity}
James~B Hartle, \emph{Unitarity and causality in generalized quantum mechanics
  for nonchronal spacetimes}, Physical Review D \textbf{49} (1994), no.~12,
  6543.

\bibitem{hawking1992chronology}
Stephen~W Hawking, \emph{Chronology protection conjecture}, Physical Review D
  \textbf{46} (1992), no.~2, 603.

\bibitem{hod1999high}
Shahar Hod, \emph{High-order contamination in the tail of gravitational
  collapse}, Physical Review D \textbf{60} (1999), no.~10, 104053.

\bibitem{hod2000radiative}
\bysame, \emph{Radiative tail of realistic rotating gravitational collapse},
  Physical Review Letters \textbf{84} (2000), no.~1, 10.

\bibitem{lewis1976paradoxes}
David Lewis, \emph{The paradoxes of time travel}, American Philosophical
  Quarterly \textbf{13} (1976), no.~2, 145--152.

\bibitem{padmanabhan2017cosmic}
T~Padmanabhan and Hamsa Padmanabhan, \emph{Cosmic information, the cosmological
  constant and the amplitude of primordial perturbations}, Physics Letters B
  \textbf{773} (2017), 81--85.

\bibitem{padmanabhan2017atoms}
Thanu Padmanabhan, \emph{The atoms of spacetime and the cosmological constant},
  Journal of Physics: Conference Series, vol. 880, IOP Publishing, 2017,
  p.~012008.

\bibitem{padmanabhan2017we}
\bysame, \emph{Do we really understand the cosmos?}, Comptes Rendus Physique
  \textbf{18} (2017), no.~3-4, 275--291.

\bibitem{paganini2020proposal}
Claudio~F Paganini, \emph{{Proposal 42: A New Storyline for the Universe Based
  on the Causal Fermion Systems Framework}}, Progress and Visions in Quantum
  Theory in View of Gravity, Springer, 2020, pp.~119--154.

\bibitem{politzer1994path}
H~David Politzer, \emph{Path integrals, density matrices, and information flow
  with closed timelike curves}, Physical Review D \textbf{49} (1994), no.~8,
  3981.

\bibitem{ralston1969solutions}
James~V Ralston, \emph{Solutions of the wave equation with localized energy},
  Communications on Pure and Applied Mathematics \textbf{22} (1969), no.~6,
  807--823.

\bibitem{rovelli2019travel}
Carlo Rovelli, \emph{{Can we travel to the past? Irreversible physics along
  closed timelike curves}}, 2019.

\bibitem{sbierski2015characterisation}
Jan Sbierski, \emph{Characterisation of the energy of gaussian beams on
  lorentzian manifolds: with applications to black hole spacetimes}, Analysis
  \& PDE \textbf{8} (2015), no.~6, 1379--1420.

\bibitem{shoshany2019lectures}
Barak Shoshany, \emph{Lectures on faster-than-light travel and time travel},
  SciPost Physics Lecture Notes (Oct, 2019) (2019).

\bibitem{Thorne:1992gv}
K.~S. Thorne, \emph{{Closed timelike curves}}, {Proceedings, 13th International
  Conference on General Relativity and Gravitation: Cordoba, Argentina, June
  28-July 4, 1992}, 1993, pp.~295--315.

\bibitem{yurtsever1994algebraic}
Ulvi Yurtsever, \emph{Algebraic approach to quantum field theory on
  non-globally-hyperbolic spacetimes}, Classical and Quantum Gravity
  \textbf{11} (1994), no.~4, 999.

\end{thebibliography}

\end{document}